\documentclass[12pt]{article}

\usepackage{amsmath, amsthm, amssymb}
\usepackage{geometry, graphicx}
\newtheorem{proposition}{Proposition}

\title{The Financialization of Proof-of-Stake: Asymptotic Centralization under Exogenous Risk Premiums}

\author{Mikhail Perepelitsa\footnote{
Department of Mathematics, University of Houston,
\emph{PGH, 3551 Cullen Blvd, 
Houston, TX 77204-3008, USA},
\texttt{maperepelitsa@uh.edu} }
}

\date{}

\begin{document}

\maketitle

\begin{abstract}
    This paper introduces a heterogeneous macroeconomic model of a Proof-of-Stake (PoS) network to analyze the long-term centralizing effects of external traditional finance (TradFi) yields. We model a continuum of rational actors divided into two distinct classes: investors, who optimize portfolios between staking and external variance-dominated investments, and consumers, who balance staking yields against the transactional utility of holding liquid assets. By employing a quasi-linear utility function to model consumer behavior, we derive a cubic polynomial that strictly defines the unique macroeconomic equilibrium of the coupled network. The model demonstrates that, at scale, external macroeconomic factors force the complete institutional capture of the PoS consensus layer.
    Because investors have access to external risk premiums, their wealth compounds exponentially, leading to massive capital inflows that crush the protocol's internal staking yield to effectively zero. We show that as the yield is crushed, consumer wealth becomes strictly upper-bounded. Ultimately, consumers are forced to cease staking entirely and hold all remaining wealth in liquid form to satisfy their transactional constraints.
\end{abstract}

\section{Introduction}

In recent years, there has been a growing interest among institutional investors in Ethereum (ETH), which, following its transition to a Proof-of-Stake (PoS) consensus mechanism, has established itself as a stable and energy-efficient alternative to Bitcoin. At present, it is estimated that the vast majority of all staked ETH is controlled by institutional entities and liquid staking providers, with Lido and top centralized exchanges historically accounting for over 50\% of the staking market (Nansen \cite{nansen2022merge}, Lido DAO \cite{lido2026goose}). 

The primary motivation of this research is to conduct a quantitative analysis of the effects that institutional investors exert on the underlying mechanics of the Ethereum network. These macroeconomic effects are not merely cosmetic; they are highly structural and potentially critical to the long-term decentralization and security of the PoS model.

It appears that the emerging philosophy of institutional investors toward ETH is to treat its staking contract as a low-risk, low-return asset. Rather than holding ETH for its transactional utility, institutions utilize the protocol's risk-free yield as a variance hedge against the highly volatile environment of mainstream finance. 

However, this financialization couples the internal Ethereum economy with the outside macroeconomic world, steering the protocol in an unintended direction. The core concern addressed in this paper is a mechanism of macroeconomic contagion: as investors flood the Ethereum staking pool to hedge outside variance, they severely depress the protocol's endogenous staking yield. This crushed yield subsequently pushes everyday network consumers---who must hold liquid ETH to pay for transactional friction---entirely out of the staking process. If staking becomes completely monopolized by institutional investors, this critical consensus layer becomes dangerously oversensitive to the volatility of outside traditional finance (TradFi) markets.

To formally analyze this scenario, we model a stylized, heterogeneous economy of ETH consisting of two distinct actors: \textit{investors} (portfolio rebalancers who maximize expected returns between ETH staking and an outside TradFi investment) and \textit{consumers} (network users who balance the transactional utility of holding liquid ETH against the opportunity cost of the staking yield).

By formalizing this heterogeneous economy, we derive the exact macroeconomic equilibrium of the network and project its long-term dynamics. The paper yields the following primary results.

In section \ref{sec:Investor} we consider the an idealized case in which the economy is dominated by the investors only. We identify three distinct macroeconomic scaling regimes for capital inflows into the PoS network, determined entirely by the external risk-adjusted opportunity cost ($\Delta = \mu_r - \sigma_r^2$). In the variance-dominated regime ($\Delta < 0$), staked capital scales linearly with total wealth ($S^* \propto W^1$), meaning the network functions as an effective variance hedge and absorbs external capital proportionally. At the critical phase boundary ($\Delta = 0$), staked capital exhibits fractional criticality, scaling sub-linearly with wealth ($S^* \propto W^{2/3}$). Finally, in the yield-dominated regime ($\Delta > 0$), staking completely decouples from wealth ($S^* \propto W^0$). In this state, the absolute mass of staked capital hits a definitive upper bound, reducing the staking sector to a fixed-size, niche allocation that cannot scale alongside broader institutional capital inflows.

In section \ref{sec:investor-consumer} we prove that the interaction between institutional portfolio optimization and retail transactional utility resolves into a single governed strictly by a cubic polynomial. 

In section \ref{sec:dynamics} we extend the model of a heterogeneous market into a discrete-time dynamic system, we prove that the compounding nature of outside TradFi yields makes institutional staking capture inevitable. As the system scales, institutional capital  crushes the endogenous yield to zero, forcing consumers to abandon the staking pool entirely ($S_c^t = 0$) and holding 100\% of their bounded wealth as liquid utility tokens.

To isolate the structural mechanics of the yield curve and capital flows, this model relies on a few clarifying assumptions. First, the fiat price of ETH is abstracted out of the analysis. We measure all wealth, yields, and utility strictly in native ETH units. We assume that while fiat price fluctuations alter the absolute dollar value of the network, rational actors operating within the ecosystem ultimately make staking and liquidity decisions based on native relative yields and native gas constraints. 

Second, for the core heterogeneous equilibrium and dynamic proofs, we assume baseline transaction fees ($F$) are zero (or strictly negligible compared to issuance). This is done for mathematical clarity. Adding expected transaction fees ($\mu_F$) to the protocol's yield does not alter the results or asymptotic limits of the model. Because fee revenue scales inversely with aggregate network stake ($1/S$) much like the issuance curve ($1/\sqrt{S}$), adding $F$ merely shifts the coefficients of the master cubic equation without changing its degree, its roots' stability, or the terminal condition of consumer staking extinction.

The internal macroeconomics of Ethereum, and Proof-of-Stake (PoS) blockchains in general, have been rigorously studied in a series of foundational papers. Saleh \cite{saleh2021blockchain} established the baseline economic viability of PoS consensus without the energy waste of Proof-of-Work, while subsequent literature has extensively explored token pricing, the economic mechanics of transaction fee markets \cite{roughgarden2021transaction}, and the effects of block rewards on equilibrium staking levels \cite{fanti2019economics,john2021equilibrium, cong2021tokenomics}. Furthermore, an ongoing debate regarding the inherent compounding effect of staking rewards \cite{leporati2023studying} and the long-term distribution of shares concluded that, under closed-system dynamics, staking does not inevitably lead to unchecked wealth concentration among the wealthiest validators \cite{rosu2021evolution}. 

Structurally, our model builds upon the recent macroeconomic and optimal issuance frameworks for PoS developed by Jermann \cite{jermann2025optimal, jermann2025macro}. However, our research diverges significantly from the existing literature by including the external macroeconomic forces. By introducing exogenous, variance-dominated traditional finance (TradFi) opportunity costs into the system, we demonstrate that the compounding nature of outside yields  overrides internal network dynamics. This guarantees the exponential wealth concentration and institutional capture that earlier closed-system models sought to rule out.

\section{The Homogeneous Model: The Pure Investor Economy}
\label{sec:Investor}
We consider an agent with initial wealth $W_0$ who allocates a fraction $w$ to Ethereum staking and the remaining fraction $(1-w)$ to an outside liquid option (e.g., DeFi).
Let $R_s$ be the discrete return of staking, with expected value $\mu_s$ and variance $\sigma_s^2$. \\
Let $R_r$ be the discrete return of the outside option, with expected value $\mu_r$ and variance $\sigma_r^2$. \\
Assume the two returns are uncorrelated ($\rho = 0$).

The total portfolio return $R_p$ is:
\begin{equation*}
    R_p = w R_s + (1-w) R_r
\end{equation*}

The wealth at the end of the period is $W_1 = W_0 (1 + R_p)$.
The agent seeks to maximize the expected change in logarithmic utility:
\begin{equation*}
    \max_w \mathbb{E} [ \ln(W_1) - \ln(W_0) ].
\end{equation*}
The maximum is given by the Kelly formula:







\begin{equation*}
    w = \frac{\mu_s - \mu_r + \sigma_r^2}{\sigma_s^2 + \sigma_r^2}
\end{equation*}
Since $w$ does not depend on the initial wealth of the agent $W_0,$ it is the same for all agents.
Thus, the total staked amount in the economy aggregates to $wW,$
where $W$ is the investors' total wealth.

We now connect this portfolio rule to the mechanics of the Ethereum blockchain.
Let $S$ be the total amount of ETH staked. The staking yield $\mu_s$ and staking variance $\sigma_s^2$ are driven by protocol issuance ($c$) and priority fees ($\mu_F, \sigma_F^2$):
\begin{equation*}
    \mu_s = \frac{c}{\sqrt{S}} + \frac{\mu_F}{S} \quad \text{and} \quad \sigma_s^2 = \frac{\sigma_F^2}{(S)^2},
\end{equation*}
so that the optimal Kelly fraction is  a function of $S:$
\begin{equation*}
    w(S) = \frac{\left( \frac{c}{\sqrt{S}} + \frac{\mu_F}{S} \right) - \mu_r + \sigma_r^2}{\frac{\sigma_F^2}{(S)^2} + \sigma_r^2}
\end{equation*}

In macroeconomic equilibrium, the total staked ETH ($S^*$) must equal the total system wealth ($W$) multiplied by agent's optimal fraction $w^*=w(S^*)$ at the level $S^*$ (clearance condition):
\begin{equation*}
    S^* = W w^*.
\end{equation*}

From this, we find the equation for the equilibrium:
\begin{equation*}
    \frac{S^*}{W} = \frac{\frac{c}{\sqrt{S^*}} + \frac{\mu_F}{S^*} - \mu_r + \sigma_r^2}{\frac{\sigma_F^2}{(S^*)^2} + \sigma_r^2}.
\end{equation*}

To convert this into a polynomial, we substitute $x = \sqrt{S^*}$ to get
\begin{equation}
\label{eq:quartic}
    (\sigma_r^2)x^4 + W(\mu_r - \sigma_r^2)x^2 - (W c)x + (\sigma_F^2 - W \mu_F) = 0.
\end{equation}

We restrict the analysis to the regime when total system wealth ($W$) is large enough that the expected fee revenue is larger than the fee variance: $W \mu_F > \sigma_F^2.$ This is a natural assumption in the current network environment.
With this assumption, the equation has a single positive root.
This follows from Descartes' rule of signs: if there is only one sign change in the coefficients of the polynomial, written from the highest order to the lowest, then the polynomial has exactly one real, positive root $x^*,$
which corresponds to the equilibrium level of staking $S^*=(x^*)^2.$

We assume that investors can not have a short position in the outside investment, that is that $w(S)\leq 1.$ This reduces to a lower bound on $S:$ $S\geq S_{min},$ for a certain $S_{min}.$ When the investors' wealth $W\leq S_{min},$ the economy is in the equilibrium $S^*=W,$ $w^*=1.$ If the system starts in this state, it will quickly get out of it, since the staking yield is diluted by the growing wealth.
In what follows we will assume that $W$ is above this staking threshold $S_{min}.$ Notice that for the equilibrium stake $S^*$ that we found above, $w(S^*)\geq0.$

\subsection{The asymptotic expansion and the phase transition}

The exact expression for the root of the quartic equation \eqref{eq:quartic} is not available.
Useful formulas can be derived in the limiting case when the total system wealth $W \to \infty$.
We demonstrate that the limiting behavior of the network bifurcates into two distinct regimes, determined entirely by the risk-adjusted premium of the exogenous market,  $\Delta=\mu_r - \sigma_r^2$.

\begin{proposition}
\label{prop:1}
As $W \to \infty$, the equilibrium staking amount $S^*$ exhibits three distinct asymptotic scaling regimes determined by the risk-adjusted premium $\Delta = \mu_r - \sigma_r^2$:
\begin{enumerate}
    \item If $\Delta < 0$ (Variance-Dominated External Regime), the absolute staked mass grows linearly with wealth ($S^* \propto W^1$), and the staked fraction converges to a strictly positive constant: 
    \[
    \frac{S^*}{W} \to 1 - \frac{\mu_r}{\sigma_r^2}.
    \]
    \item If $\Delta = 0$ (Critical Phase Boundary), the staked capital exhibits fractional criticality, scaling sub-linearly with wealth ($S^* \propto W^{2/3}$):
    \[
    (S^*)^{\frac{3}{2}} \sim \frac{c}{\sigma_r^2} W.
    \]
    \item If $\Delta > 0$ (Yield-Dominated External Regime), the absolute staked amount converges to a fixed constant independent of $W$ ($S^* \propto W^0$):
    \[
    \sqrt{S^*}\to \frac{c + \sqrt{c^2 + 4(\mu_r - \sigma_r^2)\mu_F}}{2(\mu_r - \sigma_r^2)}.
    \]
    Moreover, the staked fraction $w^* \to 0$.
\end{enumerate}
\end{proposition}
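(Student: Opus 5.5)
The approach is to work directly with the quartic \eqref{eq:quartic}. Dividing by $W$ gives
\[
\frac{\sigma_r^2}{W}x^4 + \Delta x^2 - c x - \mu_F + \frac{\sigma_F^2}{W} = 0, \qquad x=\sqrt{S^*},
\]
whose unique positive root is $x^*=x^*(W)$ by Descartes' rule, as noted above. The plan is to use dominant balance: guess the growth rate of $x^*$ in each regime, rescale, and pass to the limit. To make the limits rigorous, I would locate the root inside intervals $[a(1-\varepsilon),a(1+\varepsilon)]$ by checking the sign of the polynomial at the endpoints. The polynomial is negative at $x=0^+$ (constant term $\sigma_F^2-W\mu_F<0$) and positive for large $x$. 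Since the positive root is unique, the polynomial is negative below $x^*$ and positive above it. A sign change on the interval therefore pins down $x^*$.

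\emph{Case $\Delta>0$.} Here $x^*$ should stay bounded. First I show boundedness. For $x\ge x_0$, where $x_0$ is the positive root of $\Delta x^2-cx-\mu_F-1=0$, the divided polynomial is positive for all $W$ large. Hence $x^*\le x_0$. On a bounded set the terms $\sigma_r^2x^4/W$ and $\sigma_F^2/W$ tend to $0$ uniformly, so the polynomial converges uniformly to $q(x)=\Delta x^2-cx-\mu_F$. The limit $q$ has the simple positive root $\bar x=\bigl(c+\sqrt{c^2+4\Delta\mu_F}\bigr)/(2\Delta)$. The sign-change argument at $\bar x\pm\varepsilon$, using $q'(\bar x)>0$, then gives $x^*\to\bar x$. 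Finally, $w^*=S^*/W\le x_0^2/W\to0$.

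\emph{Case $\Delta<0$.} Here I expect $x^*\sim\sqrt{\kappa W}$. Set $x=\sqrt{W}\,y$ and divide by $W$ once more, which gives
\[
\sigma_r^2 y^4+\Delta y^2 - \frac{c}{\sqrt W}y - \frac{\mu_F}{W}+\frac{\sigma_F^2}{W^2}=0.
\]
As $W\to\infty$ this converges, uniformly on compact $y$-sets, to $y^2(\sigma_r^2y^2+\Delta)=0$. That limit has the positive root $y^2=-\Delta/\sigma_r^2=1-\mu_r/\sigma_r^2$. The only subtlety is the double root at $y=0$. I must rule out the unique positive root drifting toward $0$. This follows because at any fixed small $y=\delta>0$ the rescaled polynomial is eventually negative: the limit value there is $\delta^2(\sigma_r^2\delta^2+\Delta)<0$. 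By uniqueness the root must then exceed $\delta$. The sign check at $y^2=\kappa(1\pm\varepsilon)$ with $\kappa=-\Delta/\sigma_r^2$ finishes the argument, giving $S^*/W=y^{*2}\to 1-\mu_r/\sigma_r^2$.

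\emph{Case $\Delta=0$.} The balance is now between $\sigma_r^2x^4$ and $Wcx$, which suggests $x^3\sim cW/\sigma_r^2$. Substitute $x=W^{1/3}z$ into the undivided quartic and divide by $W^{4/3}$:
\[
\sigma_r^2 z^4 - c z - \frac{\mu_F}{W^{1/3}}+\frac{\sigma_F^2}{W^{4/3}}=0.
\]
This converges uniformly on compacts to $z(\sigma_r^2z^3-c)=0$. Again I exclude collapse to $0$ by checking negativity at a fixed small $z=\delta$, and then apply the sign-change argument around $z^3=c/\sigma_r^2$. This yields $(S^*)^{3/2}=x^{*3}\sim cW/\sigma_r^2$.

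The main obstacle, mild as it is, is controlling the spurious root at $0$ of the limiting polynomials in Cases 1 and 2. The unique-positive-root property from Descartes' rule is exactly what makes this work: one negative value at a fixed $\delta$ forces the root above $\delta$. Everything else is uniform convergence of polynomials on compact sets together with simplicity of the limiting root.
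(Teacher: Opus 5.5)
Your proposal is correct and follows the same dominant-balance route as the paper. It uses the same three scalings ($x^*\sim\sqrt{W}$, $x^*\sim W^{1/3}$, $x^*=O(1)$) and the same limiting equations $\sigma_r^2y^2+\Delta=0$, $\sigma_r^2z^3=c$, $\Delta x^2-cx-\mu_F=0$. The only difference is rigor. The paper simply posits the ansatz $x\sim kW^{\alpha}$ and, at $\Delta=0$, assumes that $Wcx$ dominates the constant term. Your bracketing argument instead uses the fact that the unique positive root separates the negative and positive ranges of the quartic, and so it actually proves convergence of the rescaled root. It also disposes of the spurious root at $0$ automatically: the lower endpoint check at $y^2=\kappa(1-\varepsilon)$ already does this, so your separate check at a small $\delta$ is not needed.
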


\begin{proof}
Consider the the variance-dominated external regime ($\Delta < 0$, i.e., $\sigma_r^2 > \mu_r$).
First, we divide \eqref{eq:quartic} by $W:$
\begin{equation} 
    \label{eq:scaled}
    \frac{\sigma_r^2}{W} x^4 + (\mu_r - \sigma_r^2) x^2 - c x - \mu_F + \frac{\sigma_F^2}{W} = 0,
\end{equation}

Assume $x$ scales as a power of $W$, such that $x \sim k W^\alpha$ for some constants $k > 0$ and $\alpha$.
In equation \eqref{eq:scaled}, the leading order terms with respect to $W$ are the $x^4$ term (growing as $W^{4\alpha - 1}$) and the $x^2$ term (growing as $W^{2\alpha}$).
To balance these dominant terms as $W \to \infty$, their exponents must be equal:
\begin{equation*}
    4\alpha - 1 = 2\alpha.
\end{equation*}
Substituting $x \approx k W^\frac{1}{2}$ into the dominant terms yields:
\begin{equation*}
    \sigma_r^2 k^4 - |\mu_r - \sigma_r^2| k^2 = 0
\end{equation*}
From this, we find that
\begin{equation*}
    k^2 = 1 - \frac{\mu_r}{\sigma_r^2}
\end{equation*}
Recalling that $S^* = x^2$, it follows that $S^* \sim k^2 W$.
Thus, the asymptotic staking fraction is exactly $1 - \frac{\mu_r}{ \sigma_r^2}$.

\vspace{0.5cm}
Consider the critical phase boundary ($\Delta = 0$, i.e., $\mu_r = \sigma_r^2$).
Equation \eqref{eq:quartic} when $\mu_r = \sigma_r^2$ becomes:
\begin{equation*}
    (\sigma_r^2)x^4 - (W c)x + (\sigma_F^2 - W \mu_F) = 0
\end{equation*}
We expect the term $Wcx$ to dominate the zero order term $\sigma_F^2 - W \mu_F.$ Under this condition we arrive at the equation
\begin{equation*}
    \sigma_r^2 x^4 - W c x \approx 0,
\end{equation*}
from which we find that $x^3 = \frac{c}{\sigma_r^2} W:$
\begin{equation}
    \label{eq:S critical}
    (S^*)^{\frac{3}{2}}{}\sim{}\frac{c}{\sigma_r^2} W.
\end{equation}
Since $x = (S^*)^{1/2},$ we obtain the claimed scaling $S^* \propto W^{2/3}$.

\vspace{0.5cm}
Consider the yield-dominated external regime ($\Delta > 0$, i.e., $\mu_r > \sigma_r^2$).
In this case, the only viable scaling is $\alpha = 0$, implying that  $x \sim \mathcal{O}(1)$.
As $W \to \infty$, the terms divided by $W$ in Equation \ref{eq:scaled} strictly vanish:
\begin{equation*}
    \lim_{W \to \infty} \left( \frac{\sigma_r^2}{W} x^4 \right) = 0 \quad \text{and} \quad \lim_{W \to \infty} \left( \frac{\sigma_F^2}{W} \right) = 0
\end{equation*}
The quartic equation degenerates into a limiting quadratic equation:
\begin{equation} 
\label{eq:limiting_quadratic}
    (\mu_r - \sigma_r^2) x^2 - c x - \mu_F = 0
\end{equation}
which has one positive root
\begin{equation*}
    x_\infty = \frac{c + \sqrt{c^2 + 4(\mu_r - \sigma_r^2)\mu_F}}{2(\mu_r - \sigma_r^2)}.
\end{equation*}
Therefore, $S^* \to x_\infty^2$, a fixed scalar constant. Because $S^*$ is bounded and $W \to \infty$, the equilibrium staking fraction $w^* \to 0$.
\end{proof}

\subsection{Sensitivity of staking to macroeconomic opportunity cost}

We analyze the responsiveness of the equilibrium staked capital, $S^*$, to shifts in the macroeconomic environment.
Let the risk-adjusted opportunity cost be defined as $\Delta \equiv \mu_r - \sigma_r^2$.
The equilibrium condition for the pure investor is given by the implicit function $F(S^*, \Delta) = 0$:
\begin{equation}
\label{eq:der 1}
    \sigma_r^2 (S^*)^2 + W \Delta S^* - W c \sqrt{S^*} - W \mu_F = 0
\end{equation}
where $W$ is total wealth, $c$ is the issuance parameter, and $\mu_F$ represents expected transaction fees.
To find the marginal impact of the opportunity cost on staked capital, we apply the implicit function theorem, computing $\frac{dS^*}{d\Delta}$.
Differentiating both sides of the equilibrium condition with respect to $\Delta$ yields:
\begin{equation*}
    2\sigma_r^2 S^* \frac{dS^*}{d\Delta} + W S^* + W \Delta \frac{dS^*}{d\Delta} - \frac{W c}{2\sqrt{S^*}} \frac{dS^*}{d\Delta} = 0
\end{equation*}
Solving for $\frac{dS^*}{d\Delta}$ we get
\begin{equation}
\label{eq:der 2}
    \frac{dS^*}{d\Delta} = -\frac{W S^*}{2\sigma_r^2 S^* + W \Delta - \frac{W c}{2\sqrt{S^*}}}
\end{equation}

Solving for $\Delta W$ from \eqref{eq:der 1} and using this expression in \eqref{eq:der 2}, we find that
\begin{equation}
\label{eq:der 3}
    \frac{dS^*}{d\Delta} = -\frac{W}{\sigma_r^2  +  \frac{W c}{2(S^*)^{\frac{3}{2}}}  +\frac{W \mu_F}{(S^*)^2} }<0.
\end{equation}
Thus, the system goes through the critical point continuously and monotonically.
Using the asymptotic expressions for the staking level $S^*$ obtained earlier, we find the corresponding expressions for the derivative $\frac{dS^*}{d\Delta}:$
\begin{itemize}
    \item When $\Delta < 0, $ 
    \[
    \frac{dS^*}{d\Delta} \propto \frac{1}{\sigma_r^2}W;
    \]
    \item When $\Delta =0,$
    \[
    \frac{dS^*}{d\Delta} \propto \frac{2}{3\sigma_r^2}W;
    \]
    \item When $\Delta >0,$
    \[
    \frac{dS^*}{d\Delta} \propto \frac{2(S^*)^2}{c\sqrt{S^*}+2\mu_F}.
    \]
\end{itemize}

Thus, in the variance-dominated and critical regimes, the sensitivity increases with the wealth, as a growing amount of capital moves in and out of the network. In the yield-dominated regime, the sensitivity asymptotically settles a certain value, independent of the wealth $W.$ 

\section{The Heterogeneous Model: Investors and Consumers}
\label{sec:investor-consumer}
We model a continuum of rational consumers and investors interacting within a Proof-of-Stake network.
Let $M$ represent the total circulating supply of the native asset.
We parameterize the macroeconomy as follows:
\begin{itemize}
    \item \textbf{Investors:} Control fraction $\alpha$ of the supply, yielding total wealth $W_i = \alpha M$.
They stake an optimal variance-hedging fraction $w$, resulting in an absolute staked capital of $S_i = w W_i$.
    \item \textbf{Consumers:} Control fraction $(1-\alpha)$ of the supply, yielding total wealth $W_c = (1 - \alpha)M$.
They optimize between staking (for yield) and holding liquid wealth ($L_c$) for transactional utility.
\end{itemize}

We model the consumer preferences over total wealth ($W_c$) and liquid asset holdings ($L_c$) with the utility function
\begin{equation*}
    U(W_c, L_c) = W_c + \gamma \ln(L_c),
\end{equation*}
where $\gamma$ is the aggregate transactional preference (measured in units of the native asset), which absorbs population size and base utility weights. 

That is, in the model with $N$ consumers, $\gamma =\gamma_0 N,$
where $\gamma_0$ is an individual consumer's preference, measured in ETH.
The consumer optimizes by setting their marginal rate of Substitution (MRS) between liquidity and wealth equal to the protocol's mathematical staking yield ($y$):
\begin{equation*}
    \text{MRS} = \frac{\partial U / \partial L_c}{\partial U / \partial W_c} = \frac{\gamma / L_c}{1} = \frac{\gamma}{L_c}=y.
\end{equation*}

\subsection{The equilibrium condition}
For simplicity of presentation, we will assume that the network does not pay fees ($F=0$).
In this case, the staking fraction of the investment portfolio is:
\begin{equation*}
    w(S){}={}\frac{\frac{c}{\sqrt{S}}- \Delta}{\sigma_r^2}
\end{equation*}
The network reaches equilibrium when the consumer's dimensionless MRS equals the protocol's global staking yield
and the investor staking demand based on staking yield matches the staking level that determines the yield.
Substituting $L_c = W_c - S_c$, and recognizing that the yield is diluted by both consumer stake ($S_c$) and investor stake ($S_i$), the equilibrium condition is:
\begin{equation*}
    \frac{\gamma}{W_c - S_c} = \frac{c}{\sqrt{S_i+ S_c}}
\end{equation*}
The staking clearance condition for the investors is:
\begin{equation*}
    w(S)W_i = S_i,\quad S=S_i+S_c.
\end{equation*}

To solve for the global macroeconomic equilibrium, we evaluate the system where both agents dynamically react to the aggregate yield.
Let $S$ be the total aggregate network stake ($S = S_i + S_c$).
We rewrite the two equilibrium conditions using the global protocol yield $y_t = \frac{c}{\sqrt{S}}$:
\begin{align*}
    \text{Investors: } & S_i = \frac{W_i}{\sigma_r^2} \left( \frac{c}{\sqrt{S}} - \Delta \right) \\
    \text{Consumers: } & \frac{\gamma}{W_c - S_c} = \frac{c}{\sqrt{S}}
\end{align*}

First, we solve the consumer's condition to isolate their absolute stake ($S_c$) as a function of the total network stake ($S$).
Multiplying both sides by $(W_c - S_c)$ and rearranging yields:
\begin{equation}
\label{eq:Sc_exact}
    S_c = W_c - \frac{\gamma}{c} \sqrt{S}
\end{equation}
This confirms that the consumer stakes their entire baseline wealth ($W_c$) strictly minus their liquidity requirement, which scales proportionally with $\sqrt{S}$.
Now, we write the total stake equal to the sum of both groups' individual stakes ($S = S_i + S_c$).
Substituting our behavioral equations into this identity:
\begin{equation*}
    S = \underbrace{\frac{W_i}{\sigma_r^2} \left( \frac{c}{\sqrt{S}} - \Delta \right)}_{\text{Investor Stake } (S_i)} + \underbrace{W_c - \frac{\gamma}{c} \sqrt{S}}_{\text{Consumer Stake } (S_c)}
\end{equation*}

Expanding the investor term and grouping the constant wealth components together:
\begin{equation*}
    S = \frac{W_i c}{\sigma_r^2 \sqrt{S}} - \frac{W_i \Delta}{\sigma_r^2} + W_c - \frac{\gamma}{c} \sqrt{S}
\end{equation*}

By substituting $x = \sqrt{S}$ and rearranging term we find the equation for the  equilibrium of the heterogeneous Ethereum economy:
\begin{equation}
\label{eq:cubic_master}
    x^3 + \left(\frac{\gamma}{c}\right) x^2 - \left(W_c - \frac{W_i \Delta}{\sigma_r^2}\right) x - \left(\frac{W_i c}{\sigma_r^2}\right) = 0
\end{equation}

By Descartes' Rule of Signs, this polynomial possesses exactly one unique, strictly positive real root $x^*=\sqrt{S^*}$, guaranteeing that the coupled heterogeneous network will always collapse into a single mathematically stable equilibrium.
Because the system is governed by a cubic polynomial, the exact algebraic root is cumbersome.

The remaining macroeconomic variables are computed from this value of $S^*:$
\begin{align*}
    \text{Liquid Consumer Wealth: } & L_c = \frac{\gamma}{c} \sqrt{S^*} \\
    \text{Consumer Stake: } & S_c = W_c - \frac{\gamma}{c} \sqrt{S^*} \\
    \text{Investor Stake: } & S_i = S^* - W_c + \frac{\gamma}{c} \sqrt{S^*}
\end{align*}

\subsection{Asymptotic limits}
To extract closed-form economic insights, we analyze the network under conditions of massive financialization, where total investor wealth scales to infinity ($W_i\to\infty$) and it grows faster than consumer wealth ($W_c/W_i\to0$). 
We restrict this analysis to the variance-dominated regime ($\Delta < 0$).
Returning to the master cubic equation:
\begin{equation*}
    S^{3/2} + \left(\frac{\gamma}{c}\right) S - \left(W_c - \frac{W_i \Delta}{\sigma_r^2}\right) \sqrt{S} - \left(\frac{W_i c}{\sigma_r^2}\right) = 0
\end{equation*}
Again, we try $S=kW_i^\alpha,$ for some positive constants $k$ and $\alpha.$
Balancing the  highest-order terms,
we find that $\alpha=1,$ and
\begin{equation}
    \label{eq:asymp_S}
    S \approx  - \frac{\Delta}{\sigma_r^2}W_i
\end{equation}
This shows that despite the non-linear inverse-square-root yield curve of the Ethereum protocol, at infinite scale, the total staked capital is linear in total wealth.

By substituting Equation \eqref{eq:asymp_S} back into our exact agent maps, we derive the closed-form asymptotic limits for the specific actors:


\textbf{1. Investor Stake ($S_i$):}
\begin{equation}
\label{het:S_i}
    S_i \approx  - \frac{\Delta}{\sigma_r^2}W_i  - W_c + \frac{\gamma}{c} \sqrt{S} \approx - \frac{\Delta}{\sigma_r^2} W_i
\end{equation}
At scale, the mathematical yield ($c/\sqrt{S}$) is crushed to effectively zero.
Therefore, investors completely ignore the protocol's internal yield and stake exactly their variance-hedging ratio (recall $\Delta < 0$), treating the network as a zero mean, zero risk investment. 

\textbf{2. Consumer Stake ($S_c$):}
\begin{equation}
\label{het:S_c}
    S_c = W_c - L_c \approx W_c - \frac{\gamma}{c} \sqrt{ - \frac{\Delta}{\sigma_r^2} W_i}.
\end{equation}
The evolution of $S_c$ depends how fast $W_c$ grows relative to $W_i.$ If the growth is slower (which is the case, as we will see in the next section) then at some moment $S_c=0.$
At this moment the economy separates into consumer holding a fixed amount of liquid ETH and not staking, and investors that do all staking .

\section{Long-Run Dynamics and Asymptotic Centralization}
\label{sec:dynamics}
In this section, we consider the dynamics of the ETH network starting with some initial wealth levels of investors ($W_i^0$) and consumers ($W_c^0$), assuming that the market instantaneously relaxes to the equilibrium described in the previous section.
The dynamics is defined recursively. Given $(W_i^t, W_c^t, S_i^t, S_c^t, y^t)$, we define the next period vector $(W_i^{t+1}, W_c^{t+1}, S_i^{t+1}, S_c^{t+1}, y^{t+1})$ in the following way.
New consumer wealth:
\begin{equation}
\label{dyn:W_c}
    W_c^{t+1} =  W_c^t + S_c^t y^t
\end{equation}

New wealth of investors:
\begin{equation}
\label{dyn:W_i}
    W_i^{t+1} =  S_i^t (1+y^t) + (W_i^t - S_i^t)(1+R_t)
\end{equation}
where $R_t$ is a random return with mean $\mu_r$ and variance $\sigma_r^2$.
Solving equation \eqref{eq:cubic_master} for the market $S^{t+1}$, we find:
\begin{equation}
    S_i^{t+1} = \frac{W^{t+1}_i}{\sigma_r^2}\left(\frac{c}{\sqrt{S^{t+1}}}-\Delta \right)
\end{equation}
\begin{equation}
    S_c^{t+1} = W_c^{t+1}-\frac{\gamma}{c}\sqrt{S^{t+1}}
\end{equation}
and the new yield:
\begin{equation}
\label{dyn:y^t}
    y^{t+1} = \frac{c}{\sqrt{S^{t+1}}}
\end{equation}

The following proposition shows that institutional investors eventually take over the staking of ETH.
\begin{proposition}
   Suppose that the outside market is governed by i.i.d. random variables $\{R_t\}$, with mean $\mu_r > 0$ and variance $\sigma_r^2$, and that it is variance-dominated ($\Delta = \mu_r - \sigma_r^2 < 0$).
   Given non-zero initial values of $W_i^0, W_c^0$, in the dynamic process \eqref{dyn:W_c}--\eqref{dyn:y^t}:
   \begin{itemize}
       \item the investor wealth grows exponentially $W_i^t \geq W_i^0 e^{g t}$, for some growth rate $g > 0$;
       \item there is a period $T$ after which the consumers stop staking ($S_c^t = 0$) and hold all their wealth in liquid ETH;
       \item the wealth of consumers ($W_c^t$) is upper bounded by some fixed amount $W_{max}$.
   \end{itemize}
\end{proposition}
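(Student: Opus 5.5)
The plan is to prove the three claims in the order listed, since each feeds the next. Exponential growth of $W_i^t$ forces the aggregate stake $S^t$ to grow. That makes the yield $y^t=c/\sqrt{S^t}$ decay geometrically, which bounds consumer wealth. The bound on consumer wealth, together with the unbounded liquidity requirement $(\gamma/c)\sqrt{S^t}$, then forces consumer staking to zero. Throughout I will impose the constraint $S_c^t\ge 0$: when the interior formula \eqref{eq:Sc_exact} becomes negative, the consumer sits at the corner $S_c^t=0$, $L_c^t=W_c^t$. In that regime the master cubic \eqref{eq:cubic_master} is replaced by the pure investor quartic \eqref{eq:quartic} with $\mu_F=\sigma_F^2=0$. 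Proposition \ref{prop:1} still applies to it in the case $\Delta<0$.

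\textbf{Step 1 (investor growth).} Write $1-w^t=(W_i^t-S_i^t)/W_i^t$ for the outside fraction. From \eqref{dyn:W_i},
\[
\ln\frac{W_i^{t+1}}{W_i^t}=\ln\bigl(1+w^t y^t+(1-w^t)R_t\bigr)\ \ge\ \ln\bigl(1+(1-w^t)R_t\bigr),
\]
because $y^t\ge 0$. By \eqref{eq:asymp_S} (equivalently Proposition \ref{prop:1}, case 1), $1-w^t$ stays in a compact interval around $\mu_r/\sigma_r^2\in(0,1)$. On that interval the second-order Kelly approximation gives
\[
\mathbb E\ln\bigl(1+(1-w)R\bigr)\approx (1-w)\mu_r-\tfrac12(1-w)^2(\mu_r^2+\sigma_r^2),
\]
which is positive at $1-w=\mu_r/\sigma_r^2$ when $\mu_r>0$. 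Define $g$ as (slightly less than) the infimum of this expected log-growth over the admissible interval. A strong law of large numbers for the log-increments should then give $\ln W_i^t-\ln W_i^0\ge gt$ for all $t$ beyond some random time. Since the $w^t$ are adapted rather than fixed, I would use a martingale SLLN, or a comparison with the i.i.d. sum at the worst-case fraction. After shrinking $g$, and possibly adjusting $W_i^0$ to cover the finitely many early periods, this yields the stated bound.

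I expect Step 1 to be the main obstacle. The statement $W_i^t\ge W_i^0e^{gt}$ cannot hold deterministically for arbitrary i.i.d. $R_t$, and it requires $1+(1-w)R_t>0$. I would therefore add the standing assumption $R_t>-1$ (bounded below) and interpret the claim almost surely, for $t$ large. I would first try to make this rigorous using only the Kelly optimality of $w$, that is, that the growth rate at the optimum dominates the growth rate at $w=0$. I would fall back on the explicit quadratic expansion above only if that fails.

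\textbf{Step 2 (yield decay and bounded consumer wealth).} Since $S^t\ge S_i^t\ge \kappa W_i^t$ for some $\kappa>0$, either by \eqref{het:S_i} or directly from the clearance condition, we get
\[
y^t\le c\,(\kappa W_i^0)^{-1/2}e^{-gt/2},
\]
so $\sum_t y^t<\infty$. From \eqref{dyn:W_c} and $0\le S_c^t\le W_c^t$ we have $W_c^{t+1}\le W_c^t(1+y^t)$. Hence $W_c^t\le W_c^0\prod_s(1+y^s)\le W_c^0\exp\bigl(\sum_s y^s\bigr)=:W_{max}$.

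\textbf{Step 3 (consumer exit).} The interior consumer stake is $S_c^t=W_c^t-(\gamma/c)\sqrt{S^t}\le W_{max}-(\gamma/c)\sqrt{\kappa W_i^t}$, and the right-hand side tends to $-\infty$. So there is a $T$ with $S_c^t=0$ for all $t\ge T$. I need to make sure that $S_c^t=0$ is absorbing. It is: once $S_c^t=0$, \eqref{dyn:W_c} gives $W_c^{t+1}=W_c^t$, while $S^t$ keeps growing, so the corner condition $\gamma/W_c^t\ge y^t$ remains satisfied. From then on consumers hold all their wealth, which is bounded by $W_{max}$, as liquid ETH.
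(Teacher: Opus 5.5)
Your chain of reasoning is the paper's: exponential growth of $W_i^t$, then geometric decay of $y^t$ through $S^t\ge S_i^t\ge\kappa W_i^t$, then a convergent product bound $W_{max}$ on consumer wealth, then the liquidity requirement $(\gamma/c)\sqrt{S^t}$ outgrowing $W_c^t$. Steps 2 and 3 are sound. They are slightly tighter than the paper, because $S_i^t\ge(-\Delta/\sigma_r^2)W_i^t$ holds exactly from the clearance condition (since $y^t>0$), not only asymptotically via \eqref{het:S_i}.

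The step that fails as designed is Step 1, where you discard $w^ty^t$. The investor rule gives exactly $1-w^t=(\mu_r-y^t)/\sigma_r^2$. So $1-w^t$ is close to $\mu_r/\sigma_r^2$ only when $y^t$ is small, i.e.\ only when $S^t$ is already large. Citing \eqref{eq:asymp_S} to get this for all $t$ presupposes the growth you are trying to prove. Whenever the stake is small (at the start, or after any run of bad draws of $R_t$), $1-w^t$ is near zero, and it is $\le 0$ once $y^t\ge\mu_r$. Your lower bound $\ln(1+(1-w^t)R_t)$ then has drift near zero or non-positive. So the infimum defining $g$ is not positive, and neither a martingale SLLN nor a comparison with a fixed-fraction i.i.d. sum can be run on it. Your fallback does not help either: Kelly optimality against $w=0$ only yields a growth rate of about $\mu_r-\sigma_r^2/2$. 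That is negative whenever $\sigma_r^2>2\mu_r$, a case permitted by $\Delta<0$.

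The fix is to keep the yield term, which is what the paper does. In the same mean--variance approximation that produces the Kelly rule, the conditional expected log-increment is $G^t=w^ty^t+(1-w^t)\mu_r-\tfrac12(1-w^t)^2\sigma_r^2$. Since $w^t$ is its maximizer (with or without the no-short constraint), compare it with the fixed fraction $w'=1-\mu_r/\sigma_r^2\in(0,1)$ rather than with $w=0$:
\[
G^t\ \ge\ \Bigl(1-\frac{\mu_r}{\sigma_r^2}\Bigr)y^t+\frac{\mu_r^2}{2\sigma_r^2}\ \ge\ \frac{\mu_r^2}{2\sigma_r^2}.
\]
This holds uniformly in $t$, however small $S^t$ is, and it is exactly the paper's $g^t$. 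With this uniform drift, plus your assumption $R_t>-1$ and a bound on conditional variances, the martingale SLLN gives $W_i^t\ge Ce^{gt}$ almost surely with a random $C>0$. Your Steps 2 and 3 then go through with a path-dependent $W_{max}$.

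Your remark that the deterministic bound $W_i^t\ge W_i^0e^{gt}$ cannot hold pathwise is correct. The paper sidesteps it only by replacing the random growth factor with its expected logarithm.

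Lastly, do not justify absorption by $S^t$ ``keeping growing'', since it need not be monotone. You do not need that argument: $W_i^t\to\infty$ almost surely already makes $W_{max}-(\gamma/c)\sqrt{\kappa W_i^t}<0$ for all $t\ge T$.
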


\begin{proof}
We sketch an informal proof of this proposition. Consider the evolution of the investor wealth $W_i^t$ from equation \eqref{dyn:W_i}.
Using the asymptotic formula for $S_i^t$ from \eqref{het:S_i}, we can write:
\[
    W_i^{t+1} = W_i^{t}\left(1 + \left(1-\frac{\mu_r}{\sigma_r^2}\right)\frac{c}{\sqrt{S^t}} + \frac{\mu_r}{\sigma_r^2}R_t\right)
\]
We approximate this expression further with a deterministic compounding geometric growth rate $g^t$:
\[
    W_i^{t+1} = W_i^{t}e^{g^t}
\]
where
\[
    g^t = \mathbb{E}\left[\log\left( 1 + \left(1-\frac{\mu_r}{\sigma_r^2}\right)\frac{c}{\sqrt{S^t}} + \frac{\mu_r}{\sigma_r^2}R_t\right) \right] \approx \left(1-\frac{\mu_r}{\sigma_r^2}\right)\frac{c}{\sqrt{S^t}} + \frac{\mu_r^2}{2\sigma_r^2} \geq \frac{\mu_r^2}{2\sigma_r^2}
\]
In this way,
\begin{equation*}
   W_i^{t+1} \geq  W_i^{t}e^{\frac{\mu_r^2}{2\sigma_r^2}} 
\end{equation*}
which means that $W_i^t \geq W_i^0 e^{\frac{\mu_r^2}{2\sigma_r^2}t}$ and its growth rate is determined strictly by the outside environment.
Using this and \eqref{het:S_i}, we can estimate the yield:
\[
    y^t = \frac{c}{\sqrt{S_i^t+S_c^t}} \leq c\sqrt{\frac{\sigma_r^2}{-\Delta W_i^t}} \leq C_0 e^{-\frac{\mu_r^2}{4\sigma_r^2}t}
\]
with $C_0 = c\sqrt{\frac{\sigma_r^2}{-\Delta W_i^0}}$.
Next, from \eqref{dyn:W_c}, we find that:
\[
    W^{t+1}_c \leq W_c^t\left(1+ C_0 e^{-\frac{\mu_r^2}{4\sigma_r^2}t}\right)
\]
Thus, unrolling the recurrence yields an infinite product:
\[
 W_c^{t+1} \leq W_c^0 \prod_{k=1}^t \left(1+ C_0 e^{-\frac{\mu_r^2}{4\sigma_r^2}k}\right) \leq W_c^0 \prod_{k=1}^\infty \left(1+ C_0 e^{-\frac{\mu_r^2}{4\sigma_r^2}k}\right) = W_{max} < +\infty
 \]
 
Now, using the marginal utility balance, we find:
 \[
 \frac{\gamma}{W^t_c-S^t_c} = y^t \leq C_0 e^{-\frac{\mu_r^2}{4\sigma_r^2}t} \to 0
 \]
as $t$ increases.
Since the left-hand side is strictly bounded below by $\gamma/W_{max}$ (which is independent of $t$), there must be a finite period $T$ such that the interior solution breaks and $S_c^T = 0$.
That is, the consumers stop staking entirely and are permanently prevented from re-entering the staking pool for all $t > T$.
\end{proof}

\begin{table}[h!]
\centering
\renewcommand{\arraystretch}{1.2} 
\begin{tabular}{|l|c|r|}
\hline
\textbf{Parameter Description} & \textbf{Symbol} & \textbf{Simulation Value} \\ 
\hline\hline
Total Initial Token Supply & $M$ & $120,000,000$ \\ \hline
Initial Investor Wealth & $W_i^0$ & $12,000,000$ (10\%) \\ \hline
Initial Consumer Wealth & $W_c^0$ & $108,000,000$ (90\%) \\ \hline
Outside Expected Return & $\mu_r$ & $0.05$ \\ \hline
Outside Volatility & $\sigma_r$ & $0.30$ \\ \hline
Risk-Adjusted Opportunity Cost & $\Delta$ & $-0.04$ \\ \hline
Protocol Issuance Parameter & $c$ & $150$ \\ \hline
Aggregate Transactional Preference & $\gamma$ & $2,000,000$ \\ 
\hline
\end{tabular}
\caption{Baseline parameters for the numerical simulation of the heterogeneous dynamic economy, calibrated to approximate the Ethereum  and a variance-dominated external macroeconomic environment ($\sigma_r^2 > \mu_r$).}
\label{tab:sim_params}
\end{table}

\section{Numerical Simulation}
\label{sec:simulation}

To empirically illustrate the asymptotic limits derived in the previous section and demonstrate the finite-time dynamics of the heterogeneous economy, we simulate the recursive system under realistic macroeconomic and network parameters. While the theoretical proof guarantees that consumer staking eventually goes to zero in the limit ($t \to \infty$), the simulation demonstrates how rapidly this phase transition occurs under stochastic external market conditions.

At each discrete period $t$, the outside return $R_t$ is drawn from a normal distribution $\mathcal{N}(\mu_r, \sigma_r^2)$. The total stake is computed by finding the unique positive root of the master cubic equation \eqref{eq:cubic_master}, after which wealth is updated recursively according to equations \eqref{dyn:W_c} and \eqref{dyn:W_i}.

\subsection{Simulation results}

The results of the simulation confirm the structural capture of the consensus layer. As the simulation progresses, the investor wealth ($W_i^t$) compounds exponentially due to the external risk premium, while the consumer wealth ($W_c^t$) visibly stagnates, confirming the upper bound $W_{max}$ derived in our theoretical proof. 

This massive influx of compounding institutional wealth into the staking contract mathematically crushes the endogenous protocol yield ($y^t \to 0$). As the yield is compressed, the opportunity cost of holding liquid ETH vanishes. The rational consumers are forced to un-stake their assets to maintain the absolute liquidity ($L_c$) required by their transactional preference ($\gamma$). 

\begin{figure}[htbp]
    \centering
    \includegraphics[width=0.8\textwidth]{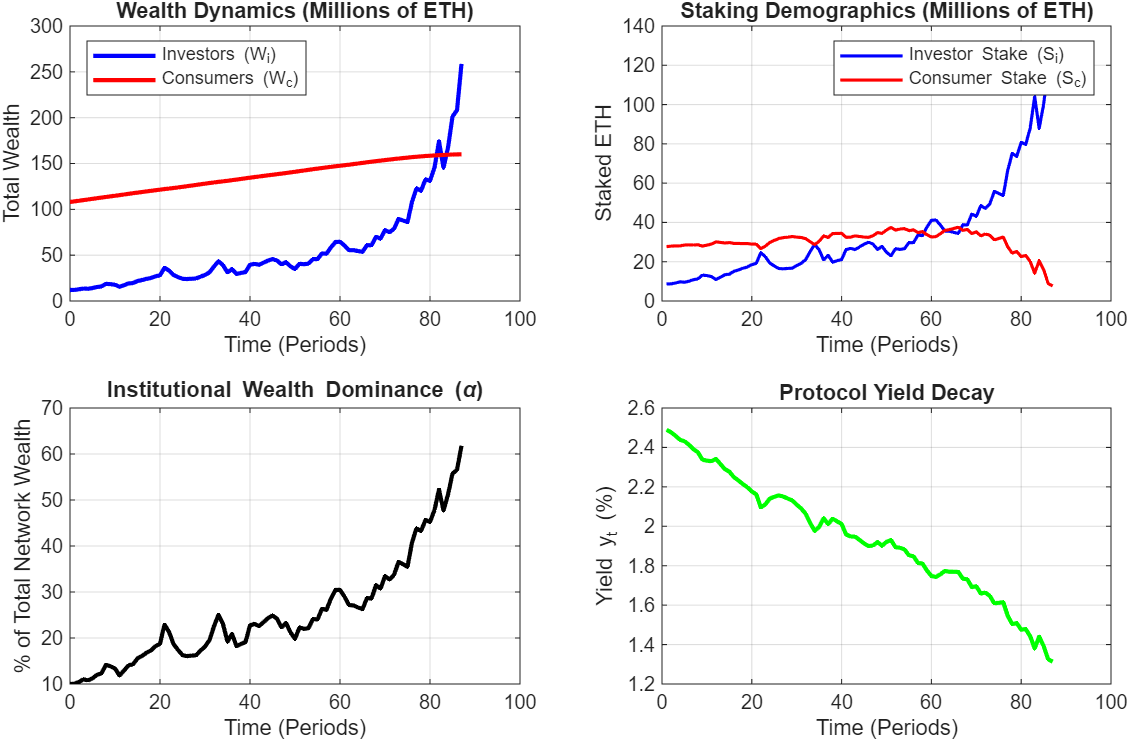} 
    \caption{Simulation of the heterogeneous economy over time. The top panels show the exponential divergence of investor wealth and the ultimate extinction of consumer staking. The bottom panels demonstrate the institutional takeover of total network wealth ($\alpha=W_i^t/(W_i^t+W_c^t)$) and the rapid decay of the protocol yield.}
    \label{fig:simulation}
\end{figure}

\section*{Acknowledgments}

During the preparation of this work, the author used Gemini 3.1 Pro (Google) to assist with LaTeX formatting, structuring the literature review, and refining the academic prose of specific sections. After using this tool, the author reviewed and edited the content as needed and takes full responsibility for the mathematical proofs, models, and final content of the publication.

\bibliographystyle{plain}
\bibliography{refs}

@unpublished{jermann2025macro,
  author = {Jermann, Urban},
  title  = {A Macro Finance Model for Proof-of-Stake {Ethereum}},
  note   = {Working Paper},
  year   = {2025}
}

@unpublished{jermann2025optimal,
  author = {Jermann, Urban},
  title  = {Optimal Issuance for Proof-of-Stake Blockchains},
  note   = {Working Paper, University of Pennsylvania},
  year   = {2025}
}

@article{saleh2021blockchain,
  author  = {Saleh, Fahad},
  title   = {Blockchain without Waste: Proof-of-Stake},
  journal = {The Review of Financial Studies},
  volume  = {34},
  number  = {3},
  pages   = {1156--1190},
  year    = {2021}
}

@article{rosu2021evolution,
  author  = {Ro\c{s}u, Ioanid and Saleh, Fahad},
  title   = {Evolution of Shares in a Proof-of-Stake Cryptocurrency},
  journal = {Management Science},
  volume  = {67},
  number  = {2},
  pages   = {661--672},
  year    = {2021},
  doi     = {10.1287/mnsc.2020.3791}
}

@unpublished{fanti2019economics,
  author = {Fanti, Giulia and Kogan, Leonid and Viswanath, Pramod},
  title  = {Economics of Proof-of-Stake Payment Systems},
  note   = {Working Paper},
  year   = {2019}
}

@article{cong2021tokenomics,
  author  = {Cong, Lin William and Li, Ye and Wang, Neng},
  title   = {Tokenomics: Dynamic Adoption and Valuation},
  journal = {The Review of Financial Studies},
  volume  = {34},
  number  = {3},
  pages   = {1105--1155},
  year    = {2021},
  doi     = {10.1093/rfs/hhaa089}
}

@unpublished{john2021equilibrium,
  author = {John, Kose and Rivera, Thomas J. and Saleh, Fahad},
  title  = {Equilibrium Staking Levels in a Proof-of-Stake Blockchain},
  note   = {Working Paper, SSRN 3965599},
  year   = {2021},
  url    = {https://ssrn.com/abstract=3965599}
}

@misc{nansen2022merge,
  author       = {{Nansen}},
  title        = {The Merge - A Deep Dive With Nansen},
  year         = {2022},
  url          = {https://research.nansen.ai/articles/the-merge-a-deep-dive-with-nansen},
  note         = {Accessed: 2026-04-26}
}

@misc{lido2026goose,
  author       = {{Lido DAO}},
  title        = {GOOSE-2025 \& EGGs-2025 Final Report},
  year         = {2026},
  url          = {https://research.lido.fi/t/goose-2025-eggs-2025-final-report/11304},
  note         = {Accessed: 2026-04-26}
}

@inproceedings{roughgarden2021transaction,
  author    = {Roughgarden, Tim},
  title     = {Transaction Fee Mechanism Design},
  booktitle = {Proceedings of the 22nd ACM Conference on Economics and Computation},
  year      = {2021}
}

@inproceedings{leporati2023studying,
  title     = {Studying the Compounding Effect: The Role of Proof-of-Stake Parameters on Wealth Distribution},
  author    = {Leporati, Alberto},
  booktitle = {Proceedings of the 5th Distributed Ledger Technology Workshop (DLT 2023)},
  series    = {CEUR Workshop Proceedings},
  volume    = {3460},
  pages     = {1--14},
  year      = {2023},
  url       = {https://ceur-ws.org/Vol-3460/papers/DLT_2023_paper_2.pdf}
}

\end{document}